\pgfplotsset{compat=1.17} 
\newtheorem*{thm*}{Theorem}
\newtheorem{prop}{Proposition}
\newtheorem*{prop*}{Proposition}
\newtheorem*{lem*}{Lemma}
\newtheorem{defn}{Definition}
\newtheorem{rem}{Remark} 
\newtheorem*{rem*}{Remark}
\definecolor{green}{HTML}{51A351}
\definecolor{blue}{HTML}{2F96b4}
\definecolor{gray}{HTML}{BCBCBC}
\DeclareMathOperator*{\argmax}{arg\max}
\title{Why do elites extend property rights: unlocking investment and the switch to public goods} 
\author{Alastair Langtry\footnote{University of Bristol.  Email: \emph{alastair.langtry@bristol.ac.uk.} I am grateful to Toke Aidt, Jean Paul Carvalho, Matthew Elliott, Lukas Freund, Gilat Levy, Marcel Schlepper, Julia Shvets, Ronny Razin, Sarah Taylor, Stephane Wolton, and Alfred Zhang for insightful comments and discussions, and to seminar participants at the University of Cambridge, the LSE Young Theorists Conference and the Silvaplana Workshop on Political Economy. This work was supported by the Economic and Social Research Council [award reference ES/P000738/1]. Any remaining errors are the sole responsibility of the author.}}
\date{\today \\ \vspace{7mm}}
\begin{document}

\maketitle
\doublespacing
\begin{abstract}
This paper presents a new rationale for a self-interested economic elite voluntarily extending property rights. When agents make endogenous investment decisions, there is a commitment problem. Ex-post, the elite face strong incentives to expropriate investments from the non-elite (who don’t have property rights), which dissuades investment. Extending property rights to new groups can resolve this problem, even for those not given property rights, by making public good provision more attractive to the elite. Unlike other models of franchise extensions, extending property rights in this paper does not involve the elite ceding control to others. Rather, it changes the incentives they face. Additionally, adding identity groups to the model shows that an elite faces weaker incentives to resolve the commitment problem when it is part of a minority identity -- identity fragmentation makes it harder for a society to extend property rights.
\end{abstract}

\onehalfspacing
\newpage
The presence of well-enforced property rights is a standard assumption in most economic models. But in practice, many people do not have secure property rights, and for much of history they have only been available to economic elites. Their spread has been a key factor in economic progress. Perhaps the leading explanation for why they have been extended is as a response to the threat of revolution \citep{acemoglu2000did, acemoglu2006economic}. There, it is a 
commitment to let the poor choose policy -- which is necessary to stave off revolution. Another important explanation focuses on how extending voting rights affects the incentives of internally warring elites \citep{lizzeri2004did, llavador2005partisan}. 

This paper presents an alternative mechanism. It shows that extending property rights can resolve a hold-up problem when a traditional commitment device is unavailable, and induces everyone else to invest costly effort into production. Critical to the mechanism is the presence of a public good. Because it is non-rival, providing a public good is relatively more attractive compared to expropriation when the elite is larger. So by increasing the size of the elite enough, the elite change their own incentives towards public good provision. The elite never cede power or tie their own hands. Rather, they change what they \emph{prefer}.

The key insight of the paper is the new mechanism: it explains why a self-interested elite might extend property rights, even absent any internal disagreements or external threats. Additionally, it shows that the initial size of the elite can itself be an important factor in whether the elite gets expanded further. And its focus on property rights, rather than voting rights, helps show how better institutional arrangements can be created without voting and without the elite ceding power. An extension that builds in identity groups (be they ethnic, religious, or something else) also shows that elites from a minority identity group are less inclined to extend property rights. Highly fragmented identity groups can hinder the move to better institutional arrangements. Taken together, these features and insights perhaps make the model a better fit for Post-War developing economies, rather than 19th Century Europe which motivates much of the other work in this area. 

In the model, agents are initially either in the elite or they are disenfranchised. Everyone is endowed with some raw materials. But only the elite have property rights -- they can expropriate the disenfranchised, but cannot themselves be expropriated. First, the (initial) elite can choose to give property rights to extra people -- expanding the elite. Second, everyone chooses how much effort to invest in producing goods. Finally, the elite choose whether to expropriate the disenfranchised and consume the resources, or to put the resources into a public good technology. Note that this commits the elite to an extension of property rights, but leaves them free to choose whether or not to expropriate.

With a simple model comes simple mechanics. The elite's choice in the third stage depends on how widespread property rights are. If the elite is large, the spoils of expropriation must be shared widely, and they are better off putting the resources into the public good technology -- as it is non-rival. Otherwise, the elite expropriates. In the second stage, the elite always invests because they have secure property rights. The disenfranchised only invest if they expect public good provision in the third stage -- there is no point in investing if the elite will expropriate them. Conditional on second-stage decisions, extending property rights in the first stage is costly to the elite -- it leaves fewer people to expropriate. The only benefit is that it can induce investment by the disenfranchised. Therefore the elite will not extend rights further than they need to.

This boils down to two institutional arrangements: one where the elite extends property rights to some new people -- but not to everyone -- and invests in public goods, and one where it does not and continues to simply expropriate. We can think of these as `inclusive' and `extractive' institutions.\footnote{This is the terminology adopted in \cite{acemoglu2012nations}. While inclusive institutions in my model involve some agents without secure property rights, they are inclusive in the sense that there is public good provision from which everyone benefits equally.} 

I provide an exact characterisation of when the elite wants to extend property rights and provide public goods, and when it does not. Notably, the elite is more inclined to extend property rights when the elite is larger to start with. This is because it is less costly to switch the elite's preferred option from expropriation to public good provision, as a smaller expansion in property rights is needed to make this switch. Unsurprisingly, the elite is more inclined to extend property rights when investment and/or the public good technology is more productive (there is more to be gained from incentivising investment and using it to provide public goods). And they are less inclined when there is a greater endowment of goods that are not produced with effort (there is more to steal). 

Finally, I consider an extension where people belong to identity groups and have some in-group altruism. That is, they care about those who share their identity. The key insight is that, within the confines of my model, societies find it easier to improve institutions and get widespread investment when a majority group is in power. This is because the elite cares more about the disenfranchised -- as a larger fraction of the ruling identity group starts out in the disenfranchised group. This can push the elite to adopt better institutions because doing so unambiguously improves outcomes for the disenfranchised.

\paragraph{A brief example.} 
To help fix ideas, consider how this mechanism may have played a role in the economic reforms in the People's Republic of China that started in the late 1970s. These reforms, known as ``reform and opening-up'' (\emph{G\v{a}ig\'{e} k\={a}if\`{a}ng}) within China, were started at the end of 1978 by Deng Xiaoping 
and are widely recognised as helping to bring about China's growth in subsequent decades \citep{ross2018china}.

Before the reforms, individuals outside the small core of communist party leadership had few, if any, property rights. Central planning did not let individuals retain the proceeds of their investments. One of the first major reforms was the decollectivisation (and in many cases de facto privatisation) of farms and the creation of non-farm `township and village enterprises' in rural areas. This extended property rights to new groups by allowing them to engage in some market transactions. It was also coupled with greater security for people owning property \citep{huang2008capitalism}. At the same time, the communist party engaged in a substantial expansion of public goods provision -- notably transport infrastructure and legal codes.
These institutional changes are widely credited with the large increase in incomes that accompanied them \citep{ray2002chinese}. 

Important features of the reforms -- an extension of property rights, public good provision and significant economic growth -- all line up with the core features of my model. But expanding rights and improving institutions are not special to my model. Two key features do differentiate my model from others. First is the kind of rights that were extended: China's reforms never involved an extension of \emph{voting} rights. This is exactly the case in my model, but something other models would struggle to accommodate. Second, is the \emph{motivation} for institutional change. My model claims the elite pursues reform in order to reap the benefits of economic growth. This appears to be the actual motivation of communist party leadership at the time \citep{ray2002chinese, aiguo2016china, gao2022china}. While anecdotal, one of Deng Xiaoping's more famous quotes is telling: \emph{``Poverty is not socialism. To be rich is glorious.''}  
Together, these features suggest my model and the economic reforms in China are a useful match.

\subsection*{Related literature}\label{sec:literature}
\paragraph{Extending Rights.} This paper principally sits within a strand of literature that asks why economic elites extend rights to new groups. To date, there are two main views. The first, following \cite{acemoglu2000did}, is that elites extend voting rights in order to preempt the threat of (violent) revolution. Extending voting rights cedes control over policy decisions to the newly enfranchised group -- it is a commitment device.\footnote{This core idea has been extended in a number of ways: \cite{conley2001endogenous} consider how the disenfranchised poor can invest in \emph{creating} a threat of revolution, and \cite{pittaluga2015democracy} consider a threat coming from an enfranchised middle class (rather than the disenfranchised poor).} 
The second, following \cite{lizzeri2004did} and \cite{llavador2005partisan}, is that it is a response to \emph{intra}-elite conflict. Some faction within the elite benefits from expanding the set of voters. 

My main contribution here is to provide a new explanation for why the elite would extend rights -- they want to increase the size of the pie by incentivising investment by the disenfranchised.  I also focus on property rights, as distinct from voting rights. In my model, the newly enfranchised need not have any say over policy. Extending rights changes the incentives, and hence choices, of the initial elite. This mechanism is closest in spirit to \cite{lizzeri2004did}, where extending rights also changes the elite's preferences. Both also emphasise the key role of public goods and their ability to facilitate better institutions.

\paragraph{Public goods, ethnic groups \& institutions.} Existing work in this strand of literature mostly focuses on how political institutions affect public good provision \citep{lake2001invisible, de2005logic, deacon2009public} -- finding that democracies provide more public goods. I examine the relationship in both directions. I recover the finding that more widely shared rights lead to more public goods. But I also find that more productive public goods can lead to more widely shared rights.

In line with a strand of literature that considers the impact of ethnic group makeup on institutions, I find that a highly fragmented ethnic group structure is bad for public good provision and can hinder institutional improvements. Existing work focuses on heterogeneous valuation of public goods by different ethnic groups \citep{alesina1999public, besley2011pillars, bandiera2011diversity} as the driving force. My mechanism is different -- it is driven by in-group altruism. Elites are more willing to provide public goods when more of their own ethnic group is outside the elite because public goods benefit everyone.

\paragraph{Bandit problems.} Finally, by bringing production to the foreground, my model also connects to work on `bandit problems'. There, ``theft by `roving bandits' destroys incentives to invest and produce'' \citep{olson1993dictatorship}. Everyone can be better off when the bandit settles down to become a `stationary bandit' and provides the public good ``peace''. However, these models typically lack a mechanism for moving between institutional regimes, and cannot explain why a bandit would extend rights to others. My model does both.\footnote{They also bear some similarity to an extreme form of a hold-up problem \citep{grossman1986costs, tirole1986procurement}. The non-elite make investments ex-ante, and then the elite (or the bandit) can expropriate the surplus ex-post. The contracting problem -- the inability to commit not to expropriate -- is key both here and in the classic hold-up literature. But my model has no notion of joint production or relationship-specific investments. Consequently, the solution to the problem is very different. Transfers of ownership (a typical recommendation from classic hold-up work) have no place here.}

\section{Model}\label{sec:model}
\paragraph{Agents and endowments.} There are $N$ agents, with typical agent $i$. Agents are either \emph{elite} or \emph{disenfranchised}. Initially, there are $E^0$ elite agents and $(N - E^0)$ disenfranchised agents. Elite agents have property rights. Disenfranchised agents do not. Having property rights means that an agent cannot have her resources taken from her, and receives an equal share of resources taken from others. For convenience, I will use subscripts $d$/$e$ to denote typical disenfranchised/elite agents. Agents are otherwise identical. 

Each agent $i$ is endowed with \emph{raw materials}, $M>0$. They can also \emph{invest} $I_i$ units of effort to produce $A f(I_i)$ units of \emph{finished goods}, with $A > 0$, $f(\cdot)$ strictly increasing and concave, and $f(0) = 0$. Also assume that $A f'(0) > 1$ and $A f'(\infty) < 1$. I will refer to raw materials and finished goods collectively as \emph{resources}, $R_i \equiv A f(I_i) + M$.

\paragraph{Timing \& Decisions.} The game has three stages. In stage one, the (initial) elite choose whether to grant property rights to some disenfranchised agents, and if so, how many. In stage two, agents make investment decisions. In stage three, the (new) elite choose whether to steal from the disenfranchised or to provide public goods.

\emph{Stage One.} One initial elite agent is chosen uniformly at random. This agent chooses whether to grant property rights to some disenfranchised agents. Formally, she chooses $E \in [E^0, N]$.\footnote{All members of the initial elite have the same preferences, so having one agent make this choice is just to avoid the equilibrium multiplicity issues common to voting games (i.e. where agents are indifferent between all choices because they are not pivotal). The same reasoning also applies to the third stage.} For convenience, I allow this extension to be to fractions of an agent. Property rights cannot be taken away, hence we must have $E \geq E^0$. If $E > E^0$, then $(E - E^0)$ initially disenfranchised agents are selected uniformly at random to join the elite.

\emph{Stage two.} All agents simultaneously choose a level of investment $I_i \geq 0$. 

\emph{Stage three.} One elite agent is chosen uniformly at random. This agent chooses whether to \emph{steal} from (i.e. expropriate) disenfranchised agents ($S$) or use the disenfranchised agents' resources to provide \emph{public} goods ($P$).\footnote{We could add an option to do nothing and leave the disenfranchised untouched. But it is clear this strategy is dominated and so will never be chosen.} 
Denote the choice $V \in \{S,P\}$. Assume the elite agent resolves indifference in favour of public good provision.

If the elite steal, disenfranchised agents' resources are split equally amongst the elite. If the elite provide public goods, disenfranchised agents' resources are used to produce a public good with a linear technology. Using $x$ units of resources creates $G \cdot x$ units of public goods, with $G \in (1/N , 1)$. 

\paragraph{Strategies.} Initial elite agents make three decisions (one in each stage). So a strategy for an elite agent is $\sigma_e = (E_e, I_e, V_e)$, where $E_e \in \mathbb{R}$, $I_e: \mathbb{R} \to \mathbb{R}$, and $V_e: \mathbb{R} \to \{S,P\}$.\footnote{Formally, the mapping for $I_e$ should be  $I_e: \mathbb{R}^{E^0} \to \mathbb{R}$. In principle, $I_e$ could be a function of both $E$ and the identity of the agent who made the choice. But only $E$ matters for payoffs, and so is a sufficient statistic for the outcome of stage one. Similarly, the mapping for $V_e$ should be $V_e: \mathbb{R}^{E^0} \times \mathbb{R}^N \to \{S,P\}$. But total resources held by the non-elite is a sufficient statistic for the outcome of stage two.}

Initially disenfranchised agents face two possible options. Either they remain disenfranchised -- in which case they only choose investment in the second stage. Or they join the elite, in which case they choose investment in the second stage and (might) choose the institutional arrangement in the third. So a strategy for an initially disenfranchised agent is $\sigma_d = (I_d, I_e,V_e)$, where $I_d: \mathbb{R} \to \mathbb{R}$, and $I_e,V_e$ are as before (this is because, once they have joined the elite, they become identical to the initial elite agents).

\paragraph{Payoffs.} Agents care only about their own material payoffs. All agents of a given type are identical, so we index payoffs by type. The payoffs are: 
\begin{align}\label{eq:payoffs}
\Pi_d = 
\begin{cases}
(N - E) R_d G - I_d \\ 
- I_d               
\end{cases}
    \ \text{ and } \quad 
\Pi_e = 
\begin{cases}
(N -prefersd G + R_e - I_e \quad &\text{ if } V = P \\
(N - E) R_d \frac{1}{E} + R_e - I_e \quad &\text{ if } V = S
\end{cases}
\end{align}       

\paragraph{Solving the Game.} The game is one of complete information. So we will look for Pure Strategy Subgame Perfect Nash Equilibria (\emph{equilibria}, for convenience) of this game, and solve by Backwards Induction.

\paragraph{An assumption.} Finally, assume the initial elite is not too large. Specifically: $E^0 < 1/G$. This is only for clarity of exposition. When the assumption does not hold, then the elite prefer to provide public goods even without extending the franchise: there is no commitment problem. It is an uninteresting case. For completeness, I cover it in the online appendix.

\paragraph{Discussion.}
The two key assumptions in this model are: (1) that the elite lack a traditional commitment device, and (2) that all members of the elite get a share of expropriated resources. The first creates the potential for a hold-up problem; the disenfranchised might be dissuaded from investment by fear of expropriation. The second ensures that expropriation is relatively less attractive when the elite is larger (compared to public good provision).

\begin{enumerate}
    \item \textbf{No commitment device:} this requires a world where the elite cannot create rules that constrain their future actions -- they always have the power to change rules when they want to. This is a common assumption in games with a hold-up style problem (for example, \cite{acemoglu2000did}). Mechanically, it is reflected in the timing of the game (but note that the insights would not change if investment and expropriation decisions happened simultaneously).
    
    \item \textbf{Sharing in expropriated resources:} every member of the elite must have enough power to get \emph{some} constant share of expropriated resources, including those who were recently granted property rights. One way to view this, natural in some contexts, is that a new member of the elite is granted some land that she will control, and from which she can extract rents. Alternatively, we can view this as the outcome of some unmodelled bargaining or contest process through which the elite divide up the expropriated resources. The stronger assumption that every member gets an \emph{equal} share is helpful for tractability and for presenting clean results (in particular, it makes members of the elite homogeneous), but the qualitative insight would remain if it were relaxed.
\end{enumerate}

Additionally, I have assumed that everyone benefits equally from the public good. This is a natural benchmark, but it is not important. As with the elites sharing in the expropriated resources, what matters is that everyone receives \emph{some} benefit.

\section{Results}\label{sec:results}
At the heart of the model is a commitment problem.\footnote{While a commitment problem is also at the heart of the \cite{acemoglu2000did} framework, the way it is solved there is different. There, extending the franchise involves the rich (elite) ceding decision-making power to the poor (disenfranchised). This allows the rich to commit ex-ante to a policy they would not choose ex-post.} 
Conditional on the size of the pie -- which depends critically on investment decisions -- the elite wants to take as much of it as possible. This means they will expropriate. But the disenfranchised know this, and so do not invest. Investment in the economy is low and the pie is small. A promise to provide public goods would not be credible -- in my model, the elite holds all of the power and can ultimately do whatever it wants. So to unlock investment by the disenfranchised, the elite must change what it \emph{wants} to do. This is not a standard commitment device: the elite can never tie their hands in this model. Rather, they can change the incentives they face, and so change their preferred action.

The elite can do this by extending property rights to new people (i.e. adding them to the elite) -- the one thing they can commit to in my model -- which makes expropriation less attractive relative to providing public goods.\footnote{This commitment power is embedded in the inability to reverse an extension of property rights.} 
With a larger elite, there are more people who take a share of any expropriated resources. In contrast, the public good is non-rival, so this cost of a larger elite does not apply when the elite provides public goods. Therefore, if enough extra people are given property rights, expropriation becomes less attractive in \emph{absolute} terms than providing public goods. This resolves the commitment problem. We can see this most clearly in the elite's payoffs in \Cref{eq:payoffs}, which I restate here for convenience:
\begin{align*}
    \Pi_e(V=P) &= (N - E) R_d G + R_e - I_e, \\
    \Pi_e(V=S) &= (N - E) R_d \frac{1}{E} + R_e - I_e.
\end{align*}

It is clear that $\Pi_e(V=P) \geq \Pi_e(V=S)$ if and only if $E \geq 1/G$.
However, conditional on $R_d$, both $\Pi_e(V=S)$ and $\Pi_e(V=P)$ are decreasing in $E$. So the elite never want to extend property rights by more than is necessary to resolve the commitment problem and induce investment by the disenfranchised. This is because extending property rights leaves fewer disenfranchised agents whose resources can be used to provide public goods, and means they must share more widely. All else equal, the elite benefit from having a large disenfranchised group because it leaves them with more people to exploit. So if they occur, extensions of property rights will only be partial.

\begin{rem}
    Property rights will never be extended to everyone.
\end{rem}

Another implication of this logic (and the highly stylised setup of my model) is that there is no benefit in extending property rights to some people if the extension is not sufficient to resolve the commitment problem.

This leaves two possible outcomes. In one, the elite extends property rights just enough to change their own preferences over the institutional arrangement -- which is to exactly $E = 1/G$. They then choose inclusive institutions (i.e. to provide public goods) and the disenfranchised invest. In the other, the elite do not extend the franchise at all and choose extractive institutions. This comes at the cost of lower investment and hence a smaller pie -- but the elite get a bigger slice of it. 

Which institutional arrangement the elite chooses comes down to a simple comparison of their own material payoffs under each scenario. This creates a sharp threshold. On one side, the elite prefers to extend property rights and adopt inclusive institutions. On the other, they choose extractive institutions and forego investment by the disenfranchised. The following result formalises this discussion.

\begin{prop}\label{prop:eq}
There is a unique equilibrium. The elite extends property rights to $E = 1/G$, provide public goods, and all agents invest if and only if
\begin{align}\label{eq:threshold}
    E^0 \cdot \left(G + (G - \frac{1}{N}) \cdot \frac{A f(I_{d}^*)}{M} \right) \geq 1,
\end{align}
where $I_d^* = f^{\prime -1}(1/GA)$.
Otherwise, the elite does not extend property rights, expropriates, and only elite agents invest.
\end{prop}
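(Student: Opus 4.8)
The plan is to solve the three stages in reverse. In \emph{stage three}, for a fixed elite size $E$ and fixed resources $R_d$ held by a typical disenfranchised agent, the comparison $\Pi_e(V=P)\ge\Pi_e(V=S)$ already given in the text reduces to $(N-E)R_d G\ge (N-E)R_d/E$, i.e.\ (whenever $R_d>0$ and $E<N$) to $E\ge 1/G$. With indifference resolved toward $P$, the stage-three choice is $P$ iff $E\ge 1/G$ and $S$ otherwise; I would also note the degenerate cases $R_d=0$ or $E=N$ do not disturb this. In \emph{stage two}, each agent maximizes her own payoff over $I_i\ge 0$. An elite agent's payoff always contains the term $R_e-I_e=Af(I_e)+M-I_e$, so she solves $\max_I Af(I)-I$; by the Inada-type conditions $Af'(0)>1>Af'(\infty)$ and strict concavity this has the unique interior solution — but note the relevant first-order condition for the elite is $Af'(I_e)=1$ giving $I_e^*=f'^{-1}(1/A)$ — wait, I must be careful: an elite agent under $P$ gets extra $(N-E)R_dG$ but that does not depend on her \emph{own} $I_e$ (her resources are not expropriated and not put into the public good, since she is elite), so indeed $I_e^*=f'^{-1}(1/A)$. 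A disenfranchised agent under $P$ gets $(N-E)R_dG-I_d$ where her own contribution to the public good pot enters through $R_d=Af(I_d)+M$ scaled by $G$ (the multiplier $(N-E)$ is the pot size per disenfranchised agent in a symmetric profile, but for her own deviation only her own $R_d$ term matters with coefficient $G$), so she solves $\max_I GAf(I)-I$, giving $I_d^*=f'^{-1}(1/GA)$ as stated; under $S$ she gets $-I_d$ and optimally sets $I_d=0$.

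In \emph{stage one}, by the Remark and the preceding discussion the only candidates are (i) $E=1/G$ with public goods and everyone investing, versus (ii) $E=E^0$ with expropriation and only the elite investing; any $E\in(E^0,1/G)$ is strictly dominated by $E^0$ (stage-three outcome unchanged, but $\Pi_e$ strictly decreasing in $E$ conditional on $R_d$), and any $E>1/G$ is strictly dominated by $E=1/G$ for the same reason. So I would compute the payoff to a (randomly chosen) initial elite agent under each candidate. Under (ii): $R_d=M$ (disenfranchised don't invest), $R_e=Af(I_e^*)+M$, and the agent gets $(N-E^0)M\cdot\frac1{E^0}+Af(I_e^*)+M-I_e^*$. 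Under (i): the deciding agent is an initial elite member with probability $1$ conditional on being asked (she is the one choosing), $R_d=Af(I_d^*)+M$, $R_e=Af(I_e^*)+M$, and she gets $(N-\tfrac1G)\big(Af(I_d^*)+M\big)G+Af(I_e^*)+M-I_e^*$. Subtracting, the common term $Af(I_e^*)+M-I_e^*$ cancels, and (i) is weakly preferred iff
\[
\Big(N-\tfrac1G\Big)\big(Af(I_d^*)+M\big)G \;\ge\; \frac{(N-E^0)M}{E^0}.
\]
The final step is to show this rearranges to \eqref{eq:threshold}. Multiplying out the left side gives $NGAf(I_d^*)+NGM - Af(I_d^*) - M$; the right side is $NM/E^0 - M$; cancelling $-M$ from both sides and dividing by $M$ yields $NG\cdot\frac{Af(I_d^*)}{M} + NG - \frac{Af(I_d^*)}{M} \ge \frac{N}{E^0}$; multiplying by $E^0/N$ gives $E^0\big(G + (G-\tfrac1N)\tfrac{Af(I_d^*)}{M}\big)\ge 1$, which is exactly \eqref{eq:threshold}. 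Uniqueness follows because every stage has a unique best response (strict concavity in stage two, strict inequalities off the knife-edge in stages one and three, with the stated tie-breaking pinning down the boundary), so the subgame-perfect equilibrium is unique.

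\textbf{Main obstacle.} The routine parts are the stage-two optimizations and the final algebra. The step requiring the most care is the stage-one payoff comparison: one must correctly account for \emph{who} holds which resources and which terms scale with the number of disenfranchised agents versus with an individual's own investment, and must justify that partial extensions $E\in(E^0,1/G)$ and over-extensions $E>1/G$ are dominated — i.e.\ that conditional on $R_d$ both branches of $\Pi_e$ are strictly decreasing in $E$, so that $E=1/G$ is the \emph{unique} optimal size whenever extension occurs. Handling the knife-edge $E=1/G$ and the tie-breaking convention cleanly is what delivers the ``if and only if'' and the uniqueness claim.
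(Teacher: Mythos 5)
Your proposal is correct and follows essentially the same route as the paper: backwards induction with the stage-three cutoff $E \geq 1/G$, the stage-two optima $I_e^* = f^{\prime -1}(1/A)$ and $I_d^* = f^{\prime -1}(1/GA)$ (or $0$ under expropriation), and a stage-one comparison of the only two candidates $E = E^0$ and $E = 1/G$ using the fact that both payoff branches are decreasing in $E$. Your explicit algebra rearranging the comparison into \eqref{eq:threshold} is exactly the ``simple rearranging'' the paper leaves implicit.
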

 
Comparative statics follow immediately from this characterisation of equilibrium behaviour. An increase in the initial size of the elite, $E^0$, means that the elite have to share expropriated resources more widely, which reduces the benefits of expropriation. Higher productivity of individual effort, $A$, raises the resources available if the disenfranchised can be induced to invest, increasing the benefits of resolving the commitment issue. Both make inclusive institutions relatively more attractive, and so make the elite more inclined to extend property rights.
In contrast, a larger endowment of raw materials, $M$, makes adopting inclusive institutions less attractive. This is because there is more for the elite to steal absent investment by the disenfranchised, so each elite agent must give up more when a new agent is granted property rights. 

The role of the productivity of the public goods technology is more nuanced. By increasing the value of spending on public goods, and hence the payoff to resolving the commitment problem, it makes adopting inclusive institutions more attractive. But in doing so it makes provision of public goods the preferred option, and hence credible, for a smaller size of the elite. And as discussed, the elite never extends property rights further than needed to resolve the commitment issue. So a more productive public goods technology makes an extension of property rights more likely, but makes it smaller when it does happen.

\begin{prop}\label{prop:comp stat G}
The size of the elite is decreasing in $G$ if \Cref{eq:threshold} is satisfied, and is increasing in $G$ otherwise.
\end{prop}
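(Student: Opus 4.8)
The plan is to read the equilibrium elite size off \Cref{prop:eq} and sign its dependence on $G$ separately in the two regimes identified there. By \Cref{prop:eq}, the equilibrium size of the elite is $E^{*}=1/G$ on the set of parameters where \Cref{eq:threshold} holds and $E^{*}=E^{0}$ elsewhere, so the proposition reduces to two almost immediate observations together with a check on how $G$ moves the economy between these two regimes.

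When \Cref{eq:threshold} holds, $E^{*}=1/G$, hence $\partial E^{*}/\partial G=-1/G^{2}<0$: a more productive public-good technology needs a smaller elite to make public-good provision (weakly) preferred, so any extension that does occur is smaller. When \Cref{eq:threshold} fails, $E^{*}=E^{0}$ does not depend on $G$, so it is (weakly) increasing; the substantive content is that raising $G$ can only move the economy toward the inclusive regime, and when it crosses over, the elite jumps up from $E^{0}$ to $1/G$, which exceeds $E^{0}$ by the maintained assumption $E^{0}<1/G$.

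To make that last point precise I would show that the left-hand side of \Cref{eq:threshold},
\[
E^{0}G+E^{0}\left(G-\tfrac{1}{N}\right)\frac{A f(I_{d}^{*})}{M},\qquad I_{d}^{*}=f^{\prime -1}\!\left(1/(GA)\right),
\]
is strictly increasing in $G$. The first term is strictly increasing. In the second, $G-1/N>0$ because $G>1/N$; concavity of $f$ makes $f'$ non-increasing, so $f^{\prime -1}$ is non-increasing and, since $1/(GA)$ falls as $G$ rises, $I_{d}^{*}$ and hence $A f(I_{d}^{*})$ are non-decreasing in $G$; thus the second term is a product of non-negative, non-decreasing factors and so is non-decreasing. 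Adding the two gives a strictly increasing function of $G$. Consequently the set of $G$ on which \Cref{eq:threshold} holds is up-closed within $(1/N,1)$ (if it holds at some $G$ it holds at every larger $G$) — an interval of the form $[G^{*},1)$ (possibly empty, possibly all of $(1/N,1)$) — so within each regime marginal perturbations of $G$ stay in that regime, and at the boundary $G^{*}$ the equilibrium elite size jumps upward from $E^{0}$ to $1/G^{*}$.

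I do not expect a real obstacle here: the statement is essentially a corollary of \Cref{prop:eq}, and the only genuine computation is the monotonicity of the threshold in $G$, which is routine given the maintained assumptions on $f$. The one point to handle carefully is that $E^{*}$ is discontinuous at the regime boundary, so the precise content is that $E^{*}$ is monotone in $G$ in the stated direction within each regime, with the jump at the threshold being an increase — which is exactly where the assumption $E^{0}<1/G$ is used.
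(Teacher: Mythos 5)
Your proof is correct and takes essentially the same route as the paper: read the equilibrium elite size off Proposition~\ref{prop:eq}, note that $1/G$ is decreasing and $E^0$ is constant in $G$ within each regime, and observe that an increase in $G$ crossing the threshold induces an upward jump from $E^0$ to $1/G$ (using $E^0 < 1/G$). Your explicit check that the left-hand side of \Cref{eq:threshold} is increasing in $G$ is a detail the paper's proof leaves implicit, but it does not alter the argument.
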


I now turn to comparative statics for total output -- essentially the level of economic activity in the society. Total output (which I will denote $Y$) is the sum of all agents' payoffs: $Y = E \Pi_e + (N - E) \Pi_d$. In this model, total output coincides with utilitarian welfare, and so we could view results about output as reading across to welfare.\footnote{But this coincidence is an artefact of linear preferences, so we should be careful not to put too much weight on a welfare interpretation.}

In line with basic intuition, total output is increasing in both the initial size of the elite, $E^0$, and in the productivity of individual effort, $A$. While mostly straightforward, there are two forces at work: (1) the direct impact on output for a given institutional arrangement, and (2) the impact on which institutional arrangement is chosen. For these parameters, both effects push in the same direction.

For the endowment of raw materials, $M$, the two effects work against each other. Conditional on the institutional arrangement, more raw materials are clearly good -- they add directly to welfare. But more raw materials can induce the elite to steal, where they might otherwise have extended property rights and provided public goods. So a small increase in $M$ increases welfare everywhere \emph{except} at the point where it induces the elite to switch from providing public goods to stealing. 

The impact of the public good productivity, however, is ambiguous. It has a direct impact on output -- higher $G$ increases output for any given level of inputs. And it also makes the initial elite more inclined to adopt inclusive institutions. But it also makes it easier to resolve the commitment problem -- so when the initial elite do extend property rights, they do so by less. This third effect may dominate. For some parametrisations, having fewer agents in the elite (all else equal) reduces total output. 

\paragraph{Adding growth.} The game here is one-shot, and so cannot have a meaningful notion of growth. But it is easy to play this game infinitely many times and to enrich it with (endogenous) learning-by-doing technology growth a la \cite{romer1986increasing} -- where the next period's TFP depends on investment today.\footnote{For simplicity, I assume that agents do not account for the impact their decisions (either over $I$ or over $E$) have on future technology -- so agents treat future technology as exogenous. This shuts down dynamic considerations regarding investment and regarding the institutional arrangement.} 
In this world, technology initially grows slowly due to the investment by the elite and is on a path to converge to some low steady state. But if that technology growth pushes society over the threshold set out in \Cref{prop:eq} then there is institutional change -- which unlocks investment by the disenfranchised. This higher level of investment in turn spurs more technology growth, and leads society to converge to a high(er) steady state.

Whether a society ever gets over the threshold from \Cref{prop:eq} depends critically on the other parameters. More valuable raw materials and a smaller initial elite make the threshold higher, and so make the society more likely to get stuck in the low steady state. This yields political economy versions of the standard poverty trap and resource curse results, now driven by my mechanism. It also creates a symbiotic relationship between technology and institutions -- a society can have bad technology because it has bad institutions, and also have bad institutions because it has bad technology. I set out this extended model and the associated results formally in the Online Appendix.

\section{Identity Groups}\label{sec:extension}
I now extend the model by endowing agents with a \emph{group} -- this could be an ethnic, religious or identity group -- and assuming some in-group altruism. Such a preference for the in-group finds substantial support in the experimental literature (e.g. \cite{bernhard2006group, yamagishi2008does, abbink2012parochial} and references within). This allows us to explore how the social makeup of a society affects its institutional arrangements. The headline result is that elites from a majority group are more likely to extend property rights and choose inclusive institutions. However, when they do so, they will add fewer others into the elite. This is because their altruism makes public good provision more attractive, and consequently easier to resolve the commitment problem.

\paragraph{Extending the model.} The model is as described in \Cref{sec:model}, with the following change: each agent has an exogenous group $j \in J$, and they put some weight on the average material payoff of the other agents in group $j$.\footnote{While there are many ways that group membership could affect preferences, I focus on in-group altruism. Additionally, I assume that people identify only with one exogenous group. This abstracts away from the fact that people may identify with multiple groups, and can \emph{choose} which to make salient. A growing literature, following \cite{akerlof2000economics} and \cite{shayo2009model}, explores these identity choices. More recent work includes \cite{atkin2021we, grossman2021identity} and \cite{ghiglino2024endogenous}.}
Formally, preferences are: 
\begin{align}\label{eq:payoffs tribe}
    U^j_i = \Pi_i + \alpha (q^j \Pi_e + (1 - q^j) \Pi_d) \quad \text{for} \ i \in \{d,e\},
\end{align}
where $q_j$ is the fraction of group-$j$ agents who are in the elite and $\alpha \in (0,1)$ captures altruism towards the in-group. As before, $\Pi_e$ and $\Pi_d$ are the material payoffs to an agent elite/disenfranchised respectively. Without loss of generality, assume that agents in group $j$ make up a fraction $p_j^{tot}$ of the total, and a fraction $p_j^{elite}$ of the initial elite. It then follows that $q_j \equiv \frac{p_j}{p_j^{tot}} \times \frac{E}{N}$.
To keep things simple, I focus on the case where all agents in the initial elite belong to group $j$. Formally, I assume that $p_j^{elite} = 1$ for some $j \in J$ and $p_{j'}^{elite} = 0$ for all $j' \neq j$. This also maintains the assumption from earlier that all agents in the elite are homogeneous.

\paragraph{Results} When group $j$ is larger (relative to the total population), a greater fraction of group-$j$ agents are disenfranchised. This means the initial elite place more weight on the material payoffs of disenfranchised agents. In turn, this makes public good provision more attractive relative to expropriation. 

\begin{prop}\label{prop:bigger tribe}
Suppose $p_j^{elite} = 1$. An increase in the size of group $j$ can cause the elite to switch to providing public goods.
\end{prop}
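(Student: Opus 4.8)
The plan is to collapse the elite's stage-three choice to a single scalar inequality in the primitives, show that inequality moves monotonically in the population share of group $j$, exhibit a non-empty parameter region in which its sign flips as the group grows, and then add the short extra step that turns this into a statement about equilibrium play rather than a single decision node.

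First I would write out the stage-three comparison under the altruistic preferences \Cref{eq:payoffs tribe}. Fix the elite size at $E = E^0$ — so by $p_j^{elite}=1$ every elite agent is in group $j$ — and fix the stage-two resources $R_d$. The randomly chosen elite agent picks $P$ over $S$ iff $U^j_e(P) \ge U^j_e(S)$. Substituting the material payoffs from \Cref{eq:payoffs}, the term $R_e - I_e$ cancels, $\Pi_e(P) - \Pi_e(S) = (N-E^0)R_d(G - 1/E^0)$ and $\Pi_d(P) - \Pi_d(S) = (N-E^0)R_d G$, so
\begin{align*}
U^j_e(P) - U^j_e(S) = (N-E^0)R_d\,\Phi(q_j), \qquad \Phi(q_j) \equiv (1+\alpha q_j)\big(G-\tfrac{1}{E^0}\big) + \alpha(1-q_j)G.
\end{align*}
Since $N - E^0 > 0$ and $R_d \ge M > 0$, the choice is governed by $\Phi(q_j)$ alone; in particular it is independent of investment levels, which neatly removes any interaction with stage two.

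Next I would substitute $q_j = E^0/(p_j^{tot}N)$ and track the sign of $\Phi$ as $p_j^{tot}$ varies. Since $q_j$ is strictly decreasing in $p_j^{tot}$ and $\Phi'(q_j) = \alpha(G - 1/E^0) - \alpha G = -\alpha/E^0 < 0$, $\Phi$ is strictly increasing in $p_j^{tot}$: a larger group $j$ always tilts the elite toward public goods — precisely the informal mechanism stated before the proposition, since $\Pi_d(P) - \Pi_d(S) > 0$ so raising the disenfranchised weight $1-q_j$ raises the altruistic return to $P$ while the own-payoff cost of $P$, proportional to $1/E^0 - G$, is untouched. To locate the flip, evaluate the endpoints. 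At the smallest admissible share $p_j^{tot} = E^0/N$ (so $q_j = 1$), $\Phi = (1+\alpha)(G - 1/E^0) < 0$ by the standing assumption $E^0 < 1/G$. At $p_j^{tot} = 1$ (so $q_j = E^0/N$), a one-line calculation gives $\Phi = G(1+\alpha) - 1/E^0 - \alpha/N$, which is nonnegative iff $E^0 \ge \big(G(1+\alpha) - \alpha/N\big)^{-1}$; because $G > 1/N$ we have $G(1+\alpha) - \alpha/N = G + \alpha(G-1/N) > G$, so this lower bound lies strictly below $1/G$ and the window $E^0 \in \big[(G(1+\alpha) - \alpha/N)^{-1},\, 1/G\big)$ is non-empty. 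For primitives with $E^0$ in that window, $\Phi < 0$ at $p_j^{tot} = E^0/N$ and $\Phi \ge 0$ at $p_j^{tot} = 1$, so by monotonicity there is a threshold $p^\star$ with the elite expropriating at $E=E^0$ for $p_j^{tot} < p^\star$ and providing public goods for $p_j^{tot} > p^\star$: pushing the size of group $j$ past $p^\star$ is exactly the asserted switch.

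Finally, to phrase this as a switch in equilibrium behaviour rather than at a single node, I would add the following. For $p_j^{tot}$ close to $E^0/N$ the weight the elite places on the disenfranchised is close to zero, so the stage-one problem is essentially the base model; choosing the remaining primitives (in particular $M$ large, exactly as in the $M$-comparative static following \Cref{prop:eq}) so that \Cref{eq:threshold} fails makes the small-group equilibrium extractive, with $E = E^0$ and expropriation. For $p_j^{tot} = 1$ the elite provides public goods at $E = E^0$ and does not extend further, because conditional on $P$ being played every payoff entering $U^j_e$ is weakly decreasing in $E$, so no partial extension is profitable; hence the large-group equilibrium has the elite providing public goods. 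I expect this last step to be the only delicate part: one must rule out that a strictly partial extension is ever strictly preferred to the ``no extension'' benchmark over the relevant range of $p_j^{tot}$, but this follows from the same monotonicity-in-$E$ argument that underlies the remark in \Cref{sec:results} that property rights are never extended further than necessary.
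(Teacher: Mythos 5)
Your proposal is correct, and its core is exactly the paper's argument: decompose $U_e^j(P)-U_e^j(S)=(1+\alpha q^j)\left[\Pi_e(P)-\Pi_e(S)\right]+\alpha(1-q^j)\left[\Pi_d(P)-\Pi_d(S)\right]$, note that only $q^j$ moves with group size and that the disenfranchised strictly gain from $P$, and conclude the difference is strictly increasing in $p_j^{tot}$, so a larger group can only push the elite from stealing toward public goods. You then go beyond the paper in two places, and both additions buy something. First, by evaluating your $\Phi$ at the endpoints $q_j=1$ and $q_j=E^0/N$ and using $G>1/N$ to show the window $E^0\in\left[\left(G(1+\alpha)-\alpha/N\right)^{-1},\,1/G\right)$ is nonempty, you establish that the sign actually flips for some parameters; the paper's proof only shows monotonicity (a switch can only go one way) and leaves the existence behind the word ``can'' implicit. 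Second, you try to lift the stage-three switch to a switch in equilibrium play, which the paper does not attempt. The only loose step is the one you flag yourself: ``every payoff entering $U_e^j$ is weakly decreasing in $E$'' ignores that the weights $q^j$ and $1-q^j$ also move with $E$, shifting mass toward the larger payoff $\Pi_e(P)$, so monotonicity of $U_e^j(P)$ in $E$ is not immediate (the paper's own footnote in the proof of \Cref{prop:smaller tribe} makes the same elision). For your purposes you can bypass it entirely: with $p_j^{tot}=1$, the stage-three condition $E\geq(1+\alpha q^j)/(G(1+\alpha))$ from the proof of \Cref{prop:smaller tribe} is preserved as $E$ grows because $G(1+\alpha)-\alpha/N>0$, so once $P$ is weakly preferred at $E^0$ it is preferred at every feasible $E$, and the equilibrium outcome involves public-good provision regardless of the stage-one choice -- which is all the proposition asserts.
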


In terms of the institutions chosen, a larger group being in power can only help. This is because more of their group are disenfranchised, so they care more about the disenfranchised. The implication of this result is stark. There will be some cases where a country has a ruling elite from a minority group and maintains extractive institutions, but where a change of the ruling elite to a majority group would bring about a change to inclusive institutions.

However, this additional altruism benefit also has a perverse effect. Because the elite care more about the disenfranchised agents when their group is larger, the elite does not need to extend property rights by as much in order to prefer public goods provision over expropriation. But this reduces the amount by which property rights must be extended to resolve the commitment problem and unlock investment by the disenfranchised. And so the elite will extend property rights by less -- while somewhat altruistic, the elite still prefers to only extend property rights just far enough to unlock investment.

\begin{prop}\label{prop:smaller tribe}
   Suppose $p_j^{elite} = 1$. And suppose the parameters are such that the elite chooses to extend property rights and provide public goods. Then the number of agents who receive property rights in this extension is strictly decreasing in the size of group $j$.
\end{prop}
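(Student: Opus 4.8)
The plan is to reduce the statement to a one--dimensional comparative--statics exercise on the elite's stage--three threshold. First I would redo the backward induction of \Cref{sec:model} under preferences \eqref{eq:payoffs tribe}. Stage two is essentially unchanged for present purposes: conditional on public good provision the equilibrium investment levels are pinned down and are independent of the group sizes $\{p_{j'}^{tot}\}$ (altruism only rescales the disenfranchised's first--order condition by $1+\alpha$), so $R_d$ and $R_e$ are constants throughout. For stage three I would compute, for an elite agent --- who by $p_j^{elite}=1$ belongs to group $j$ ---
\[
U^j_e(V=P)-U^j_e(V=S)=(N-E)R_d\Big[(1+\alpha q^j)\big(G-\tfrac1E\big)+\alpha(1-q^j)G\Big]=(N-E)R_d\Big[G(1+\alpha)-\tfrac{1+\alpha q^j}{E}\Big],
\]
so she provides public goods iff $E\ge \frac{1+\alpha q^j}{G(1+\alpha)}$. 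Since $p_j^{elite}=1$ makes group $j$ the most over--represented group in any elite, its share $q^j$ dominates that of every other group, so this is the binding incentive constraint; I then let $E^*$ be the self--consistent root of $G(1+\alpha)E^*=1+\alpha\,q^j(E^*)$, and ``the elite provides public goods'' means the chosen elite size is at least $E^*$.

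Second, I would argue --- exactly as in the discussion preceding \Cref{prop:eq} --- that, conditional on the inclusive outcome, the elite extends no further than $E^*$. Here $U^j_e(V=P)=(1+\alpha)\Pi_e(V=P)-\alpha\big(1-q^j\big)\big(R_e-I_e+I_d\big)$, whose derivative in $E$ is $-(1+\alpha)R_dG+\alpha\,q^{j\prime}(E)\,(R_e-I_e+I_d)$; the first term (the direct cost of a smaller disenfranchised pool) must outweigh the second (the altruism--composition gain from a more ``own--group'' elite, which is positive since $R_e-I_e>0$), and I would establish this from $q^{j\prime}(E)\le 1/N$ together with $G>1/N$ and the maintained assumptions. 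Combined with the proposition's hypothesis (the elite does extend and does provide public goods, so $E^0<E^*\le N$), this pins the chosen elite size at exactly $E^*$, and the size of the extension is $E^*-E^0$.

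Third comes the comparative static itself. Writing $n_j:=p_j^{tot}N$, the one fact that does the work is that $q^j$ --- the fraction of group $j$ sitting inside a fixed--size elite --- is strictly decreasing in $n_j$ (a larger group has proportionally more of its members outside the elite); with the reduced form $q^j=p_j^{elite}E/(p_j^{tot}N)$ this is immediate, and the sign is the same for the exact expected composition. Differentiating the defining identity $G(1+\alpha)E^*=1+\alpha\,q^j(E^*;n_j)$ implicitly in $n_j$ gives
\[
\frac{dE^*}{dn_j}=\frac{\alpha\,\partial q^j/\partial n_j}{\,G(1+\alpha)-\alpha\,\partial q^j/\partial E\,}<0,
\]
the numerator negative by the monotonicity just noted and the denominator positive because $\alpha\,\partial q^j/\partial E\le\alpha/N<1/N<G<G(1+\alpha)$. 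Since $E^0$ is fixed, $E^*-E^0$ is strictly decreasing in $n_j$, which is the claim. With the reduced form it is transparent: $E^*=n_j/\big(G(1+\alpha)n_j-\alpha\big)$, so $dE^*/dn_j=-\alpha/\big(G(1+\alpha)n_j-\alpha\big)^2<0$.

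The step I expect to be the main obstacle is the second one --- confirming that the elite stops at $E^*$ rather than over--extending. The altruism--composition channel genuinely pulls toward a larger elite, so one must show it is dominated; this rests on the bound $q^{j\prime}\le 1/N$, on $G>1/N$, and on the model's standing restrictions ($E^0<1/G$, $G\in(1/N,1)$) together with the proposition's own hypothesis, which excludes the degenerate regime where resolving the commitment problem is infeasible and thereby guarantees $E^0<E^*\le N$. Everything else --- the stage--three algebra and the implicit--function step --- is routine.
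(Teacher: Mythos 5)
Your overall architecture is the same as the paper's: derive the stage-three threshold $E\ge\frac{1+\alpha q^j}{G(1+\alpha)}$, argue the elite extends to exactly that threshold, then do the comparative static in group size. Your stage-three algebra is correct and matches the paper, and your final step is actually more explicit than the paper's (which stops at the threshold formula and leaves the monotonicity in group size implicit): the closed form $E^*=n_j/\bigl(G(1+\alpha)n_j-\alpha\bigr)$ and its negative derivative are right, and your observation that the group-$j$ decider has the binding (highest) threshold is a point the paper glosses over.

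However, the step you yourself flag as the main obstacle — that the elite stops at $E^*$ rather than over-extending — is justified by an inequality that goes the wrong way. With $p_j^{elite}=1$ we have $q^j=E/(p_j^{tot}N)$, so $\partial q^j/\partial E=1/(p_j^{tot}N)\ge 1/N$, with equality only when group $j$ is the entire population; your claimed bound $q^{j\prime}(E)\le 1/N$ is false precisely in the interesting case of a minority group, and "$q^{j\prime}\le 1/N$ together with $G>1/N$" therefore does not dominate the altruism-composition term $\alpha q^{j\prime}(E)\,(R_e-I_e+I_d)$ in $\frac{d}{dE}U^j_e(P)=-(1+\alpha)R_dG+\alpha q^{j\prime}(E)(R_e-I_e+I_d)$. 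The same reversed bound is invoked for the denominator in your implicit-function step; there the fix is easy (feasibility $q^j\le 1$ at the fixed point forces $p_j^{tot}N\ge 1/G$, hence $\alpha/(p_j^{tot}N)\le\alpha G<G(1+\alpha)$), but for the no-over-extension claim you need a genuinely different argument — controlling $\alpha/(p_j^{tot}N)$ via the feasibility constraint and comparing $R_e-I_e+I_d$ with $R_d$ — rather than anything resting on $p_j^{tot}\le 1$. The paper handles this step differently: it asserts that both $U^j_e(S)$ and $U^j_e(P)$ are decreasing in $E$ (for $U^j_e(P)$ via the rearrangement $U^j_e(P)=(N-E)R_dG(1+\alpha)+(1+\alpha q^j)(R_e-I_e)-\alpha(1-q^j)I_d$), without invoking any $1/N$ bound; as written, your justification of this step fails and needs to be replaced.
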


Perhaps counter-intuitively, greater altruism would also have the same effect. The elite care more about the disenfranchised, making them more inclined to provide public goods. But with $\alpha < 1$, they are not \emph{sufficiently} altruistic to want to bring others into the elite just for the benefits it brings others. They need some personal benefit too -- namely unlocking investment by disenfranchised agents. So, again, the elite only adds the minimum number of people necessary to credibly commit to providing public goods. Like group size, stronger in-group altruism is not an unalloyed good.

\paragraph{Discussion.} The idea that ethnic fragmentation can be associated with worse economic outcomes is far from new, and finds support in cross-country data \citep{easterly1997africa, alesina2005ethnic}. This has been both considered and tested especially in the context of post-colonial Africa. That a causal link might be working through the impact of ethnic fragmentation on the prevailing institutions has been suggested in \citet[esp. S7]{acemoglu2010africa} -- although they do not present a formal model. 

The two countries they discuss in detail -- Sierra Leone and Botswana -- are instructive here. Sierra Leone had poor economic outcomes in the forty years following its independence from the United Kingdom in 1961, with particularly poor provision of public goods \citep[p.42]{acemoglu2010africa} -- and is an ethnically fragmented country, with no ethnic majority. In contrast, Botswana is one of Africa's post-colonial success stories -- and also one of its most ethnically homogeneous \citep{CIA_botswana, CIA_sierra}. My model provides a simple mechanism for how this might work as a causal relationship.

\section{Conclusion}\label{sec:conclusion}
Existing work provides two main reasons for elites to extend the franchise: to avoid revolution, or to resolve internal disagreements. This paper suggests a more simplistic one: doing so might be profitable for the elite. In the absence of a commitment device, extending rights can allow the elite to unlock investment by the disenfranchised. This works by changing the preferences of the elite, pushing them towards more inclusive institutional arrangements, which in turn changes the investment decisions of the disenfranchised. 

It differs from existing work on franchise extensions in two further ways. First, the elite resolve a commitment problem by changing what they want, rather than by tying their hands and letting others choose. In my model, the elite always retain power. Second, it focuses on the extension of \emph{property} rights, rather than of \emph{voting} rights. An analysis of property rights, as distinct from voting rights, may be important when thinking about institutional change in countries that either do not hold elections or where elections are tokenistic. 
Additionally, it examines the role that identity groupings (whether along ethnic, religious, or some other lines) can play in determining institutional arrangements. Within the model, having highly fragmented identity groups hinders the extension of property rights. This paper mainly aims fit to Post-War and contemporary developing economies, rather than the waves of institutional change in 19th-century Europe that motivated existing work on franchise extensions.






\singlespacing
\bibliographystyle{abbrvnat}
\addcontentsline{toc}{section}{References}
\bibliography{bib}

\begin{thebibliography}{37}
\providecommand{\natexlab}[1]{#1}
\providecommand{\url}[1]{\texttt{#1}}
\expandafter\ifx\csname urlstyle\endcsname\relax
  \providecommand{\doi}[1]{doi: #1}\else
  \providecommand{\doi}{doi: \begingroup \urlstyle{rm}\Url}\fi

\bibitem[Abbink et~al.(2012)Abbink, Brandts, Herrmann, and
  Orzen]{abbink2012parochial}
K.~Abbink, J.~Brandts, B.~Herrmann, and H.~Orzen.
\newblock Parochial altruism in inter-group conflicts.
\newblock \emph{Economics Letters}, 117\penalty0 (1):\penalty0 45--48, 2012.

\bibitem[Acemoglu and Robinson(2000)]{acemoglu2000did}
D.~Acemoglu and J.~A. Robinson.
\newblock Why did the west extend the franchise? democracy, inequality, and
  growth in historical perspective.
\newblock \emph{The Quarterly Journal of Economics}, 115\penalty0 (4):\penalty0
  1167--1199, 2000.

\bibitem[Acemoglu and Robinson(2006)]{acemoglu2006economic}
D.~Acemoglu and J.~A. Robinson.
\newblock \emph{Economic origins of dictatorship and democracy}.
\newblock Cambridge University Press, 2006.

\bibitem[Acemoglu and Robinson(2010)]{acemoglu2010africa}
D.~Acemoglu and J.~A. Robinson.
\newblock Why is africa poor?
\newblock \emph{Economic history of developing regions}, 25\penalty0
  (1):\penalty0 21--50, 2010.

\bibitem[Acemoglu and Robinson(2012)]{acemoglu2012nations}
D.~Acemoglu and J.~A. Robinson.
\newblock \emph{Why nations fail: The origins of power, prosperity, and
  poverty}.
\newblock Crown Business, 2012.

\bibitem[Aiguo(2016)]{aiguo2016china}
L.~Aiguo.
\newblock \emph{China and the Global Economy since 1840}.
\newblock Springer, 2016.

\bibitem[Akerlof and Kranton(2000)]{akerlof2000economics}
G.~A. Akerlof and R.~E. Kranton.
\newblock Economics and identity.
\newblock \emph{The Quarterly Journal of Economics}, 115\penalty0 (3):\penalty0
  715--753, 2000.

\bibitem[Alesina and La~Ferrara(2005)]{alesina2005ethnic}
A.~Alesina and E.~La~Ferrara.
\newblock Ethnic diversity and economic performance.
\newblock \emph{Journal of economic literature}, 43\penalty0 (3):\penalty0
  762--800, 2005.

\bibitem[Alesina et~al.(1999)Alesina, Baqir, and Easterly]{alesina1999public}
A.~Alesina, R.~Baqir, and W.~Easterly.
\newblock Public goods and ethnic divisions.
\newblock \emph{The Quarterly journal of economics}, 114\penalty0 (4):\penalty0
  1243--1284, 1999.

\bibitem[Arrow(1962)]{arrow1962economic}
K.~J. Arrow.
\newblock The economic implications of learning by doing.
\newblock \emph{The Review of Economic Studies}, 29\penalty0 (3):\penalty0
  155--173, 1962.

\bibitem[Atkin et~al.(2021)Atkin, Colson-Sihra, and Shayo]{atkin2021we}
D.~Atkin, E.~Colson-Sihra, and M.~Shayo.
\newblock How do we choose our identity? a revealed preference approach using
  food consumption.
\newblock \emph{Journal of Political Economy}, 129\penalty0 (4):\penalty0
  1193--1251, 2021.

\bibitem[Bandiera and Levy(2011)]{bandiera2011diversity}
O.~Bandiera and G.~Levy.
\newblock Diversity and the power of the elites in democratic societies:
  Evidence from indonesia.
\newblock \emph{Journal of Public Economics}, 95\penalty0 (11-12):\penalty0
  1322--1330, 2011.

\bibitem[Bernhard et~al.(2006)Bernhard, Fehr, and
  Fischbacher]{bernhard2006group}
H.~Bernhard, E.~Fehr, and U.~Fischbacher.
\newblock Group affiliation and altruistic norm enforcement.
\newblock \emph{American Economic Review}, 96\penalty0 (2):\penalty0 217--221,
  2006.

\bibitem[Besley and Persson(2011)]{besley2011pillars}
T.~Besley and T.~Persson.
\newblock Pillars of prosperity.
\newblock In \emph{Pillars of Prosperity}. Princeton University Press, 2011.

\bibitem[{CIA World Factbook}(2024{\natexlab{a}})]{CIA_botswana}
{CIA World Factbook}.
\newblock Country summary: Botswana, 2024{\natexlab{a}}.
\newblock URL
  \url{https://www.cia.gov/the-world-factbook/countries/botswana/summaries}.

\bibitem[{CIA World Factbook}(2024{\natexlab{b}})]{CIA_sierra}
{CIA World Factbook}.
\newblock Country summary: Sierra leone, 2024{\natexlab{b}}.
\newblock URL
  \url{https://www.cia.gov/the-world-factbook/countries/sierra-leone/summaries/}.

\bibitem[Conley and Temimi(2001)]{conley2001endogenous}
J.~P. Conley and A.~Temimi.
\newblock Endogenous enfranchisement when groups’ preferences conflict.
\newblock \emph{Journal of Political Economy}, 109\penalty0 (1):\penalty0
  79--102, 2001.

\bibitem[De~Mesquita et~al.(2005)De~Mesquita, Smith, Siverson, and
  Morrow]{de2005logic}
B.~B. De~Mesquita, A.~Smith, R.~M. Siverson, and J.~D. Morrow.
\newblock \emph{The logic of political survival}.
\newblock MIT press, 2005.

\bibitem[Deacon(2009)]{deacon2009public}
R.~T. Deacon.
\newblock Public good provision under dictatorship and democracy.
\newblock \emph{Public choice}, 139\penalty0 (1):\penalty0 241--262, 2009.

\bibitem[Easterly and Levine(1997)]{easterly1997africa}
W.~Easterly and R.~Levine.
\newblock Africa's growth tragedy: policies and ethnic divisions.
\newblock \emph{The Quarterly Journal of Economics}, pages 1203--1250, 1997.

\bibitem[Gao(2022)]{gao2022china}
X.~Gao.
\newblock Reform and opening-up to economic miracle, 2022.
\newblock URL
  \url{https://www.chinadaily.com.cn/a/202210/19/WS634f37e3a310fd2b29e7d394.html}.

\bibitem[Garnaut et~al.(2018)Garnaut, Song, and Fang]{ross2018china}
R.~Garnaut, L.~Song, and C.~Fang, editors.
\newblock \emph{China’s 40 Years of Reform and Development: 1978–2018}.
\newblock ANU Press, 2018.
\newblock ISBN 9781760462246.
\newblock URL \url{http://www.jstor.org/stable/j.ctv5cgbnk}.

\bibitem[Ghiglino and Tabasso(2024)]{ghiglino2024endogenous}
C.~Ghiglino and N.~Tabasso.
\newblock Endogenous identity in a social network.
\newblock \emph{arXiv preprint arXiv:2406.10972}, 2024.

\bibitem[Grossman and Helpman(2021)]{grossman2021identity}
G.~M. Grossman and E.~Helpman.
\newblock Identity politics and trade policy.
\newblock \emph{The Review of Economic Studies}, 88\penalty0 (3):\penalty0
  1101--1126, 2021.

\bibitem[Grossman and Hart(1986)]{grossman1986costs}
S.~J. Grossman and O.~D. Hart.
\newblock The costs and benefits of ownership: A theory of vertical and lateral
  integration.
\newblock \emph{Journal of political economy}, 94\penalty0 (4):\penalty0
  691--719, 1986.

\bibitem[Huang(2008)]{huang2008capitalism}
Y.~Huang.
\newblock \emph{Capitalism with Chinese characteristics: Entrepreneurship and
  the state}.
\newblock Cambridge University Press, 2008.

\bibitem[Lake and Baum(2001)]{lake2001invisible}
D.~A. Lake and M.~A. Baum.
\newblock The invisible hand of democracy: political control and the provision
  of public services.
\newblock \emph{Comparative political studies}, 34\penalty0 (6):\penalty0
  587--621, 2001.

\bibitem[Lizzeri and Persico(2004)]{lizzeri2004did}
A.~Lizzeri and N.~Persico.
\newblock Why did the elites extend the suffrage? democracy and the scope of
  government, with an application to britain's “age of reform”.
\newblock \emph{The Quarterly Journal of Economics}, 119\penalty0 (2):\penalty0
  707--765, 2004.

\bibitem[Ljungqvist and Sargent(2004)]{ljungqvist2004recursive}
L.~Ljungqvist and T.~J. Sargent.
\newblock Recursive macroeconomic theory, 2004.

\bibitem[Llavador and Oxoby(2005)]{llavador2005partisan}
H.~Llavador and R.~J. Oxoby.
\newblock Partisan competition, growth, and the franchise.
\newblock \emph{The Quarterly Journal of Economics}, 120\penalty0 (3):\penalty0
  1155--1189, 2005.

\bibitem[Olson(1993)]{olson1993dictatorship}
M.~Olson.
\newblock Dictatorship, democracy, and development.
\newblock \emph{American Political Science Review}, 87\penalty0 (3):\penalty0
  567--576, 1993.

\bibitem[Pittaluga et~al.(2015)Pittaluga, Cama, and
  Seghezza]{pittaluga2015democracy}
G.~B. Pittaluga, G.~Cama, and E.~Seghezza.
\newblock Democracy, extension of suffrage, and redistribution in
  nineteenth-century europe.
\newblock \emph{European Review of Economic History}, 19\penalty0 (4):\penalty0
  317--334, 2015.

\bibitem[Ray(2002)]{ray2002chinese}
A.~Ray.
\newblock The chinese economic miracle: Lessons to be learnt.
\newblock \emph{Economic and Political Weekly}, pages 3835--3848, 2002.

\bibitem[Romer(1986)]{romer1986increasing}
P.~M. Romer.
\newblock Increasing returns and long-run growth.
\newblock \emph{Journal of Political Economy}, 94\penalty0 (5):\penalty0
  1002--1037, 1986.

\bibitem[Shayo(2009)]{shayo2009model}
M.~Shayo.
\newblock A model of social identity with an application to political economy:
  Nation, class, and redistribution.
\newblock \emph{American Political Science Review}, 103\penalty0 (2):\penalty0
  147--174, 2009.

\bibitem[Tirole(1986)]{tirole1986procurement}
J.~Tirole.
\newblock Procurement and renegotiation.
\newblock \emph{Journal of Political Economy}, 94\penalty0 (2):\penalty0
  235--259, 1986.

\bibitem[Yamagishi and Mifune(2008)]{yamagishi2008does}
T.~Yamagishi and N.~Mifune.
\newblock Does shared group membership promote altruism? fear, greed, and
  reputation.
\newblock \emph{Rationality and Society}, 20\penalty0 (1):\penalty0 5--30,
  2008.

\end{thebibliography}
\appendix
\onehalfspacing

\section{Proofs}
\begin{proof}[\emph{\textbf{Proof of Proposition \ref{prop:eq}}}]

%
\textbf{Third Stage.} At this stage $E$, $I_d$ and $I_e$ are fixed. So the elite's payoffs, $\Pi_e$, from each action are:
\begin{align}
    \Pi_e(\text{steal}) &= \underbrace{(A f(I_e^*) - I_e^* + M) \vphantom{\frac{X}{X}} }_{\text{private consumption}} 
    + \underbrace{(N - E) \cdot \frac{1}{E} \cdot (A f(I_d^*) + M)}_{\text{expropriated resources}} \\
    \Pi_e(\text{public}) &= \overbrace{(A f(I_e^*) - I_e^* + M)} 
    + \underbrace{(N - E) \cdot G \cdot (A f(I_d) + M)}_{\text{public goods consumption}}
\end{align}
Almost everything cancels, and $\Pi_e(\text{public}) \geq \Pi_e(\text{steal})$ if and only if $E \geq 1/G$. Therefore $V_e^* = P$ if $E \geq 1/G$ and $V_e^* = S$ otherwise.\footnote{Were we to add an option to do nothing in the third stage, the payout would be $\Pi_e(\text{nothing}) = (A f(I_e^*) - I_e^* + M)$, which is clearly strictly dominated.}

\textbf{Second stage.} By Backwards Induction, the disenfranchised know that the elite will provide the public good if and only if $E \geq 1/G$. So if, $E \geq 1/G$ then $I^*_d = f^{\prime -1}(1/GA)$ (this is $I_d^* = \argmax [G A f(I_d) - I_d]$), because they know that all finished goods produced will be used to provide the public good. Otherwise, $I_d^* = 0$. The elite know they keep everything they produce, regardless of the decision made in the third stage. So $I_e^* = f^{\prime -1}(1/A)$ (this is $I_e^* = \argmax [A f(I_d) - I_e]$).

\textbf{First Stage.} Conditional on the disenfranchised agents decision in the second stage, the (initial) elite always prefers a smaller elite. This is clear from the elite's payoff:
\begin{align}
    \Pi_e(E| E < 1/G) &= \underbrace{(A f(I_e^*) - I_e^* + M) \vphantom{\frac{X}{X}} }_{\text{private consumption}} 
    + \underbrace{(N - E) \cdot \frac{1}{E} \cdot M}_{\text{exp. raw materials}} \tag{steal}\label{eq:elite payoff steal} \\
    \Pi_E(E| E \geq 1/G) &= \overbrace{(A f(I_e^*) - I_e^* + M)}
    + \underbrace{(N-E) \cdot G \cdot (A f(I_d^*) + M)}_{\text{public goods consumption}}  \tag{public}\label{eq:elite payoff public}
\end{align}
Therefore, only $E=E^0$ and $E= 1/G$ could possibly be optimal (as each equation is clearly decreasing in $E$, and $E= 1/G$ is the smallest \emph{feasible} value of $E$ conditional on $E \geq 1/G$). 
All that remains is to compare the elite's payoff in each of these cases. Substitute $E=E^0$ and $E= 1/G$ into the first and second equations, respectively and then compare them. Some simple rearranging shows that $\Pi_e(\text{public} | E = 1/G) \geq \Pi_e(\text{steal} | E = E^0)$ if and only if:
\begin{align}
E^0 \left( G + (G - \frac{1}{N}) \frac{A f(I_d^*)}{M} \right) \geq 1
\end{align}
\textbf{Existence and uniqueness} follow immediately from the argument above.\footnote{It follows immediately from this proof that when $E^0 \geq 1/G$, we have $\Pi_e(\text{public}) \geq \Pi_e(\text{steal})$. So the elite always provide public goods, and hence never have an incentive to extend the franchise any further.}
\end{proof}

\begin{proof}[\emph{\textbf{Proof of Proposition \ref{prop:comp stat G}}}]
When $E^0 \left( G + (G - \frac{1}{N}) \frac{A f(I_{d}^*)}{M} \right) \geq 1$, the elite extends the property rights to exactly $E = 1/G$ -- which is clearly decreasing in $G$. The elite makes no change to property rights whenever $E^0 \left( G + (G - \frac{1}{N}) \frac{A f(I_{d}^*)}{M} \right) < 1$. Finally, the marginal increase in $G$ that leads the equation to hold with equality induces an extension of property rights. This follows from \Cref{prop:eq}. 
\end{proof}

\begin{proof}[\emph{\textbf{Proof of Proposition \ref{prop:bigger tribe}}}]
By assumption, group $j$ constitutes the all of the elite (i.e. $p_j^{elite} = 1$). So we only need to consider the preferences of the group $j$ elite.
The utility (for each option; steal or provide public goods) is:
\begin{align*}
    U_e^j(S) &= \Pi_e(S) + \alpha ( q^j \Pi_e(S) + (1 - q^j) \Pi_d(S) ) \\
    U_e^j(P) &=  \Pi_e(P) + \alpha ( q^j \Pi_e(P) + (1 - q^j) \Pi_d(P) ) \\
\implies U_e^j(P) - U_e^j(S) &= \alpha (1 - q^j) [\Pi_d(P) - \Pi_d(S)] + (1 + \alpha q^j) [\Pi_e(P) - \Pi_e(S)]
\end{align*}
where $q^j \equiv (p_j^{elite} \cdot E)/(p_j^{tot} N)$, and $\Pi_i(V)$ is the payoff to agent $i$ when the elite choose option $V$. Notice that changing the size of group $j$ only affects $q^j$. Further, it is clear that $[\Pi_d(P) - \Pi_d(S)] > 0$ (since the disenfranchised get nothing if the elite steal, and something positive if the elite provide public goods). 

So if $[\Pi_e(P) - \Pi_e(S)] \geq 0$, then $[U_e^j(P) - U_e^j(S)] > 0$, so the elite must want to provide public goods.\footnote{In fact, in this case, they would want to provide public goods, even absent the altruism.} 
This is for any value of $q^j$. So changing the size of group $j$ has no impact on the institutional arrangements chosen. 
Now consider the case where $[\Pi_e(P) - \Pi_e(S)] < 0$. We have 
\begin{align*}
    \frac{d [U_e^j(P) - U_e^j(S)]}{d p_j^{tot}} = - \alpha \frac{d q^j}{d p_j^{tot}} [\Pi_d(P) - \Pi_d(S)] + \alpha \frac{d q^j}{d p_j^{tot}} [\Pi_e(P) - \Pi_e(S)],
\end{align*}
which must be strictly positive (since $d q^j / d p_j^{tot} < 0$). So increasing the size of tribe $j$, $p_j^{tot}$, can only induce a change from stealing (exclusive institutions) to providing public goods (inclusive institutions).


\end{proof}

\begin{proof}[\emph{\textbf{Proof of Proposition \ref{prop:smaller tribe}}}]
The logic here is very similar to \Cref{prop:eq}. \textbf{Third stage.} At this stage, $I_d$ and $E$ are fixed. Taking the utility function in \Cref{eq:payoffs tribe} and substituting in the payoffs from \Cref{eq:payoffs}:
\begin{align}
    U_e^j(S) &= (R_e - I_e + (N - E) R_d \frac{1}{E})(1 + \alpha q^j) + \alpha (1 - q^j) (- I_d) \\
    U_e^j(P) &=  (R_e - I_e + (N - E) R_d G)(1 + \alpha q^j) + \alpha (1 - q^j) ((N-E) R_d G - I_d) 
\end{align}
Now, we find the conditions under which the elite prefer to provide public goods in the third stage, i.e $U_e^j(P) \geq U_e^j(S)$. Some terms cancel immediately, yielding
\begin{align*}
    ((N - E) R_d G)(1 + \alpha q^j) + \alpha (1 - q^j) ((N-E) R_d G) &\geq 
    ((N - E) R_d \frac{1}{E})(1 + \alpha q^j) 
\end{align*}
Further rearranging then yields $U_e^j(P) \geq U_e^j(S) \iff E \geq \frac{1 + \alpha q^j}{G (1 + \alpha)}$
\textbf{Second stage.} Same as \Cref{prop:eq} -- disenfranchised will invest if and only if the elite provide public goods in the third stage, which in turn happens if and only if $E \geq \frac{1 + \alpha q^j}{G (1 + \alpha)}$. \textbf{First Stage.} Conditional on the disenfranchised agents' decisions in the second stage, the (initial) elite always prefer a smaller elite. This follows from the fact that both $U_e^j(P)$ and $U_e^j(S)$ are decreasing in $E$.\footnote{For $U_e^j(S)$, this is clear from the equation above. For $U_e^j(P)$, it is easier to see by rearranging the elite's utility to: $U_e^j(P) =  (N - E) R_d G)(1 + \alpha) + (1 + \alpha q^j)(R_e - I_e) - \alpha (1 - q^j) I_d$.} 
So if the elite does extend the first stage, they only extend to exactly $E = \frac{1 + \alpha q^j}{G (1 + \alpha)}$.
\end{proof}

\newpage
\begin{center}
\section*{Online Appendix}
\end{center}
\section{A Repeated Game Extension}
\subsection{Model}
The game set out in \Cref{sec:model} is now a stage game. The stage game is repeated infinitely times. Time is indexed $t=1,2...$. Investment $I_{i,t} \geq 0$ is chosen anew each period. Agents who have property rights at the end of period $t$ are the elite who have them at the start of $t+1$. So $E^0_{t+1} = E_t$, with $E^0_1$ given exogenously. We assume that technology, $A_t$, grows according to a simple learning-by-doing process: $A_{t+1} = (1 - \delta) A_t + a \bar{I}_t$, with $\delta \in (0,1)$, $a \geq 0$, $A_1 > 0$ given exogenously, and where $\bar{I}_t$ is the total investment in the economy.\footnote{Intuitively, agents learn new things when they invest effort in production, but cannot keep this new knowledge to themselves. We could of course have used average investment instead. Depreciation is driven by existing technology becoming obsolete or being forgotten.} 

To keep things simple, the population, endowment of raw materials, and productivity of the public good do not change over time. And for tractability, we assume that agents \emph{do not} account for the impact their decisions have on future technology. From a technical standpoint, this shuts down dynamic considerations regarding investment. Agents treat $(A_1,A_2...)$ as exogenous. The key implication of this is that the game is therefore a repeated one-shot stage game, rather than a true dynamic game.

While keeping technological advancement external to agents is standard within classic learning-by-doing models (see for example \cite{arrow1962economic, romer1986increasing} or \cite[Ch.14]{ljungqvist2004recursive}), it is particularly important for tractability here. It is also stronger, in that it requires that agents ignore the impact that both their investment decisions \emph{and} their decisions to extend property rights have on future productivity. The second part of this assumption is perhaps most reasonable if we view productivity as something that grows relatively slowly, so a given member of the elite may not live to see this higher productivity.\footnote{An alternative way of obtaining this feature would be to assume instead that agents are myopic, and only consider the within-period payoffs, or only live for one period and are then replaced by new agents with identical preferences and endowments. Both fit a world where the elite may not live to see the impact their actions today on future productivity.}

\subsection{Results}
Agents end up playing the game \emph{as if} each stage game were one-shot (even if they are forward looking and patient). Behaviour is the same in each period: the elite extends the franchise and chooses inclusive institutions if and only if the threshold in \Cref{eq:threshold} is met, and the disenfranchised invest if and only if the elite will choose inclusive institutions. This does not rely on backwards induction logic. Instead, it is because the one strategic interaction between periods -- namely that extending property rights at period $t$ affects what can happen at all $t' > t$ -- does not have any bite. Hence 

\begin{prop}\label{prop:repeated_game}
    There exists a unique Subgame Perfect Nash Equilibrium. In each period $t=1,2,...$, all agents play as in \Cref{prop:eq}.
\end{prop}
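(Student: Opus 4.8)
The plan is to show that the infinitely repeated game collapses to a period-by-period replay of the one-shot equilibrium, by arguing that the only inter-temporal linkage -- the irreversibility of property-rights extensions -- never actually constrains optimal play. First I would note, following the model's stipulation that agents do not internalise the effect of their choices on future technology, that the investment decisions $I_{i,t}$ are chosen exactly as in the static game: the period-$t$ stage payoff depends only on period-$t$ actions and on $A_t$ (which agents treat as exogenous), so the second-stage argument from \Cref{prop:eq} applies verbatim in every period, conditional on the elite's third-stage behaviour in that period.

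Next I would turn to the first-stage choice of $E_t$, where the only genuine dynamic consideration lives: choosing $E_t > E^0_t$ tightens the feasibility constraint $E_{t+k} \geq E_t$ for all future $k$. The key observation is that this tightening is never costly at the optimum. From the proof of \Cref{prop:eq}, conditional on second-stage behaviour the elite's per-period payoff is strictly decreasing in $E$ under both regimes, so within any single period the only candidate optimal choices are $E_t = E^0_t$ or $E_t = 1/G$. Hence along any equilibrium path $E^0_t \in \{E^0_1, 1/G\}$ for every $t$. It therefore suffices to show that once the threshold \Cref{eq:threshold} is met (respectively not met) at some date, it remains met (resp. not met) at all later dates, so that the elite would \emph{want} to choose $E_t = 1/G$ (resp. $E_t = E^0_1$) in every period regardless of the constraint -- meaning the constraint never binds and we may as well treat each period as a fresh static problem with $E^0 = E^0_1$.

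For that monotonicity step I would use the learning-by-doing law $A_{t+1} = (1-\delta)A_t + a\bar I_t$ with $a \geq 0$ and $\bar I_t \geq 0$, which gives $A_{t+1} \geq (1-\delta)A_t$; more to the point, total investment is non-negative so $A_t$ is bounded below by a non-decreasing quantity once investment is positive, and in any case the left-hand side of \Cref{eq:threshold}, namely $E^0\big(G + (G-\tfrac1N)\tfrac{A f(I_d^*)}{M}\big)$ with $I_d^* = f'^{-1}(1/GA)$, is increasing in $A$ (both because $f(I_d^*)$ rises and because $A$ multiplies it). So if the threshold holds at $A_{t'}$ and $A_t \geq A_{t'}$ for $t > t'$, it holds at $A_t$; I would verify $A_t \geq A_{t'}$ for $t>t'$ from the growth equation plus $\bar I_t \geq 0$ (noting that on the inclusive-institutions path investment is strictly positive, which only reinforces this). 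Conversely, if the threshold fails at every date up to some point, nothing forces a later extension. Combining: the elite's preferred first-stage action in period $t$ is exactly the static best response to the current $A_t$, the irreversibility constraint is slack, second-stage and third-stage play replicate \Cref{prop:eq}, and subgame perfection plus the uniqueness already established for the stage game deliver a unique SPNE in which every period reproduces \Cref{prop:eq}. The main obstacle is the monotonicity argument for \Cref{eq:threshold} along the equilibrium path -- one must handle the mild circularity that whether investment (and hence $A$-growth) is positive depends on which regime is played, but this resolves cleanly because positive investment only pushes $A$ (and the threshold expression) \emph{up}, so it can never overturn a previously-satisfied threshold, and an unsatisfied threshold keeps $\bar I_t$ at the elite-only level, which still leaves $A_t$ non-decreasing.
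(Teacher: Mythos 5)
Your proposal takes essentially the same route as the paper's own proof: reduce the first-stage choice to the two candidates $E_t \in \{E^0_1, 1/G\}$, observe that the left-hand side of \Cref{eq:threshold} is monotone in $A_t$ and that $A_t$ is non-decreasing, conclude that the irreversibility constraint never binds so a patient elite (who treats $(A_t)$ as exogenous) acts exactly like a myopic one, and then note that investment and the expropriation/public-good choice are purely within-period, so each period replicates \Cref{prop:eq}. Even your slightly hand-waved step that $A_{t+1}\geq A_t$ follows from $\bar I_t \geq 0$ mirrors the paper's own unargued assertion that technology "can only increase over time," so there is no substantive divergence.
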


\begin{proof}
As in \Cref{prop:eq}, it can only be optimal to not extend property rights (and keep $E_t = E^0_1$), or extend them to exactly $E_t = 1/G$.
Suppose the elite were totally myopic. And consider a period $t$ where the elite have not yet extended property rights (so $E_t^0 = E_1^0$). Then they would extend property rights to $E_t = 1/G$ if and only if 
\begin{align*}
E^0_1 \left( G + (G - \frac{1}{N}) \frac{A_t f(I_{d,t}^*)}{M} \right) \geq 1,
\end{align*}
as in \Cref{prop:eq}. Now notice that $A_t$ can only increase over time -- $A_{t'} \geq A_t$ for all $t' > t$. Therefore, if the inequality above holds in period $t$, it must also hold in $t'$ for all $t' > t$. So if, ignoring all future periods, the elite want to extend property rights in period $t$, then they would also want to do so in $t+1$.

And recall that agents treat the productivity parameters $A_1,A_2,...$ as exogenous by assumption. Therefore a patient agent will extend property rights if and only if a myopic agent would also do so. 
Investment decisions ($I_{d,t}^*, I_{e,t}^*$) and institutional arrangements ($V_t \in \{S,P\}$) are made anew each period, and do no impact any other period $t' \neq t$. Therefore they follow directly from the choice of the property rights extension, $E_t$. 
\end{proof}



For a given size of the elite, the economy will converge to a steady state level of technology and hence of output. However, there are two different values the franchise can take on; $E^0_1$ (where the initial elite have never extended property rights) or $1/G$ (where property rights have been extended to resolve the commitment problem). This then yields two different steady states for technology.
\begin{defn}[Two steady states]\label{defn:steady_states}
\begin{align*}
    A^{ss, L} \text{ solves } A &= \frac{a}{\delta} E^0 f^{'-1}(1/A)    \tag{Low steady state} \\
    A^{ss, H} \text{ solves } A &= \frac{a}{\delta} \left[ \frac{1}{G} f^{'-1}(1/A) + (N-\frac{1}{G}) f^{'-1}(1/GA) \right] \tag{High steady state}
\end{align*}
\end{defn}
Steady state payoffs and total output follow immediately from this, plus the solution to the game from Propositions \ref{prop:eq} and \ref{prop:repeated_game}. From this, the long-run effects of the initial size of the elite and the initial endowment of raw materials become apparent. 
When the low steady state value of the technology parameter (combined with the other exogenous variables) is not sufficient to induce an extension of property rights, then the economy gets stuck in the low output steady state. 

Suppose an economy starts off with a very low level of technology, $A_1$. Inducing investment by the disenfranchised will not be very attractive -- with rudimentary technology, spending effort does not result in much production of finished goods. So the elite prefer to maintain extractive institutions and expropriate what raw materials are available. Nevertheless, there is some investment (by the elite) and so some productivity growth. But this growth is lower than if there were inclusive institutions, because there is more learning-by-doing when all agents, rather than just the elite, invest.

This growth in productivity increases the attractiveness of inducing investment by the disenfranchised. For some economies, productivity will grow to a point where it is then worthwhile for the elite to extend property rights and provide public goods in order to induce investment by the disenfranchised. Following this improvement in institutional arrangements, productivity growth will increase -- a consequence of the higher overall investment.
However, for other economies, productivity will converge to a steady state -- and so growth will stop -- without it being in the elite's interest to extend property rights and provide inclusive institutions.  

\begin{rem}\label{prop:poverty trap} 
An economy converges to the low steady state if
\begin{align*}
    E^0 \left( G + \left( G - \frac{1}{N} \right) \frac{A^{ss,L} f(I_n^*)}{R} \right) < 1.
\end{align*}
Otherwise it converges to the high steady state.
\end{rem}

\begin{proof}
    This follows immediately from \Cref{prop:eq}, \Cref{prop:repeated_game} and \Cref{defn:steady_states}.
\end{proof}

A resource curse -- a large endowment of natural resources creating worse long-run outcomes -- is subsumed within this analysis. A large endowment of natural resources (`raw materials' in our model) makes extending property rights more costly. In our model, problem associated with a large endowment of natural resources is two-fold. First, it can keep an economy stuck in the low steady state. That is, it never becomes worthwhile for the elite to choose inclusive institutions (i.e. extend property rights and provide public goods). Second, even if it does not prevent the transition to the better institutional arrangement, it will delay the transition. The logic is identical. With more natural resources, extractive institutions are more attractive to the elite. So a higher level of productivity $A_t$ is needed to induce the transition. 

An economy needs to reach some `escape velocity' in order for the elite to want to transition to more inclusive institutions. While not part of the formal analysis, it is clear that temporary shocks to the initial size of the elite or (perhaps more plausibly) to the endowment of raw materials can have a long run impact via their impact on the choice of institutions.



%


\end{document}